\theoremstyle{plain}
\newtheorem{theorem}{Theorem}
\def\sinc{\mathrm{sinc}}
\def\imagunit{\mathsf{j}} 
\begin{document}

\title{Achieving Beamfocusing via \\ Two Separated Uniform Linear Arrays}

\author{
\IEEEauthorblockN{Alva Kosasih\IEEEauthorrefmark{1}, \"Ozlem Tu\u{g}fe Demir\IEEEauthorrefmark{2}, 
Emil Bj{\"o}rnson\IEEEauthorrefmark{1}
\thanks{This work was supported by the FFL18-0277 grant from the Swedish Foundation for Strategic Research.}}
\IEEEauthorblockA{\IEEEauthorrefmark{1}Department of Computer Science, KTH Royal Institute of Technology, Stockholm, Sweden (\{kosasih,emilbjo\}@kth.se)}
\IEEEauthorblockA{\IEEEauthorrefmark{2}Department of Electrical-Electronics Engineering, TOBB ETU, Ankara, Turkiye  (ozlemtugfedemir@etu.edu.tr)}

}

\maketitle

\begin{abstract}
This paper investigates coordinated beamforming using a modular linear array (MLA), composed of a pair of physically separated uniform linear arrays (ULAs), treated as sub-arrays. We focus on how such setups can give rise to near-field effects in 6G networks without requiring many antennas. Unlike conventional far-field beamforming, near-field beamforming enables simultaneous data service to multiple users at different distances in the same angular direction, offering significant multiplexing gains. We present a detailed analysis, including analytical expressions of the beamwidth and beamdepth for the MLA. Our findings reveal that using the MLA approach, we can remove approximately $36\%$ of the antennas in the ULA while achieving the same level of beamfocusing.
\end{abstract}

\begin{IEEEkeywords}
Beamfocusing, modular MIMO, near-field. 
\end{IEEEkeywords}

\section{Introduction}
\label{S_Intro}

The commercialization of massive multiple-input multiple-output (mMIMO) technology has been a cornerstone for 5G networks, enabling substantial improvements in spectral efficiency and energy efficiency compared to 4G systems  \cite{2019_Björnson_DSP}. As 5G continues to expand, covering $45\%$ of the world’s population by the end of 2023 \cite{Ericsson2024},  focus is shifting to 6G goals like ultra-high throughput, low latency, and advanced services such as wireless sensing and AI \cite{2019_Björnson_DSP,2020_Saad_IEEENet,2024_Pourkabirian_Commag}. To meet these demands, next-generation MIMO technology will incorporate significantly larger arrays and might be called extremely large aperture arrays (ELAAs). By utilizing ELAAs, we can achieve extreme multiplexing capabilities by exploiting the near-field spherical wave features to higher spatial resolution, leading to significant capacity enhancement \cite{2024_Kosasih_Arxiv}.

Beamforming in the near-field turns into \emph{beamfocusing}, where the focal point has a limited size in both angle and depth. This allows users to be spatially multiplexed in both these domains. The beamfocusing feature exists when the propagation distance is less than the Fraunhofer distance $2D^2/\lambda$ \cite{2021_Björnson_Asilomar}, where $D$ and $\lambda$ denote the aperture length of the array and the wavelength, respectively. It is preferable to achieve the beamfocusing feature through a larger aperture array rather than by utilizing higher frequencies, as the latter can lead to coverage issues \cite{2024_Bjornson_Arxiv}.
However, this comes with significant challenges: the excessive hardware and signal processing complexity and cost involved, as a large aperture array requires managing an extremely large number of antennas (e.g., thousands of antennas). 

A practical way to reduce processing complexity is by decreasing the number of antennas while maintaining the aperture area. This can be achieved with modular arrays, which consist of sub-arrays (e.g., 64 antennas) spaced apart to preserve the aperture size \cite{2024_Meng_WCL,2022_Li_CL,2024_Li_TWC}. Modular arrays repurpose existing mMIMO arrays, coordinating and synchronizing sub-arrays separated by a few meters, which is feasible with cable installations. Each sub-array can include a local processor connected to a central unit. This approach provides a scalable path toward cell-free architectures and suggests a new 6G deployment strategy: replacing a single large base station (BS) with multiple smaller panels on rooftops or building facades, such as office windows, all operating as a single BS.

In this paper, we present an in-depth analysis of the beampattern of the modular linear array (MLA), which consists of a pair of uniform linear arrays (ULAs) separated by a few meters.
The beampattern analysis is based on the beamwidth and beamdepth for MLAs, where we derive their analytical expressions. The MLAs show great potential due to their simple structure, while still fully exploiting near-field focusing with significantly fewer antennas than a standard large ULA. In addition, using the developed analytical expressions, we derive conditions such as the required number of antennas to achieve the desired beamfocusing pattern with the MLA.

\section{Beamfocusing with a Single ULA}
\label{Sect_Prelim}

In this section, we elaborate on how to achieve beamfocusing with a ULA receiver and a single-antenna user transmitter. More specifically, we consider a free-space propagation scenario with an isotropic transmit antenna located at $\left(0,0,z \right)$ and a receiver array deployed across $xy$-plane with its center in the origin. We consider a ULA with $N$ antenna elements. Each antenna element is indexed by $n \in \{1, \dots, N \}$. Antenna $n$ is centered at the point $(\bar{x}_n,0,0)$ given by $   \bar{x}_n = \left(n-\frac{N+1}{2}\right)\delta$,
where $\delta$ is the spacing between adjacent antenna elements. Unless otherwise mentioned, we assume $ \delta=\lambda/2$ throughout the paper. Antenna $n$ covers the aperture area of
\begin{align}
    \mathcal{A}_n = \left\{ (x,y,0) : |x-\bar{x}_n|\leq \frac{\delta}{2}, |y|\leq \frac{\delta}{2} \right\}.
\end{align}
The diagonal of the array is  $D_{\rm array} \triangleq \sqrt{( \delta N
)^2 + \delta^2}$ and is referred to as the aperture length of the array. The phase variations of the impinging wave across the array are negligible when the propagation distance exceeds the Fraunhofer array distance, defined as  $d_{FA} =  2D_{\rm array}^2 /\lambda$ \cite{2021_Björnson_Asilomar}.  

The transmitter sends an emitted signal polarized in the $y$-dimension. The electric field at the point $(x,y,0)$ of the receiver aperture becomes \cite[App. A]{2020_Björnson_JCommSoc}
\begin{equation}\label{eq_II_ElectField}
 E (x,y) =  \frac{E_0}{\sqrt{4 \pi}} \frac{\sqrt{z \left( x ^2+z^2\right)}} 
 { r ^{5/4}}   e^{-\imagunit\frac{2\pi}{\lambda} \sqrt{r}},
\end{equation}
where $r =  x^2 + y^2 + z^2$ is the squared Euclidean distance between the transmitter and the considered point, and $E_0$ is proportional to the electric intensity of the transmitted signal.

Using the Fresnel approximation, the electrical field at the point $(x,y)$ can be approximated as \cite{2021_Björnson_Asilomar}
\begin{equation}\label{eq:E_approx}
    E(x,y) \approx \frac{E_0}{\sqrt{4 \pi} z} e^{-\imagunit \frac{2\pi}{\lambda}\left(z + \frac{x^2}{2z} + \frac{y^2}{2z} \right)},
\end{equation}
where a first-order Taylor approximation of the Euclidean distance between the transmitter and the array is used.

 Considering small antennas and continuous matched filtering (MF) with the focal point  $(0, 0, F)$, we can approximate the normalized antenna array gain as \cite[Eq. (20)]{2021_Björnson_Asilomar}
    \begin{align}
      \hat{G}_{{\rm ULA}}  &=\frac{1}{(NA)^2} \times \nonumber\\
      &\left\vert \sum_{n=1}^N \int_{\mathcal{A}_n} e^{+\imagunit\frac{2\pi}{\lambda}\left(\frac{x^2}{2F}+\frac{y^2}{2F}\right)}e^{-\imagunit\frac{2\pi}{\lambda}\left(\frac{x^2}{2z}+\frac{y^2}{2z}\right)}   dx dy\right \vert^2  .\label{eq:BF_ULA} 
    \end{align}
Here, the received power is normalized by the total power over the antenna aperture.
\begin{theorem}\label{Fresnel_Approx_ULA}
When the transmitter is located at $(0,0,z)$ and the MF is focused on $(0, 0, F)$, the  Fresnel approximation of the normalized array gain for a ULA becomes
\begin{equation}\label{eq_III_ApproxGainRect}
\hat{G}_{\rm ULA} =
\frac{\left( {{C}^{2}}\left( \sqrt{ a } \right)+{{S}^{2}}\left( \sqrt{ a } \right) \right) \left( {{C}^{2}}\left( \sqrt{ a } N \right)+{{S}^{2}}\left( \sqrt{ a } N \right) \right)}{( N a )^{2}},
\end{equation}
 where $C(\cdot )$ and $S(\cdot )$ are the Fresnel integrals \cite{1956_Polk_TAP}, $ a = \frac{\lambda}{8{z}_{\rm eff}}$, and $z_{\rm eff} = \frac{Fz}{|F-z|} $.
\end{theorem}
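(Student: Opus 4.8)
The plan is to reduce the sum-integral in \eqref{eq:BF_ULA} to a product of two one-dimensional Fresnel integrals by exploiting separability in $x$ and $y$ together with the tiling structure of the sub-apertures $\mathcal{A}_n$. First I would merge the two quadratic phase factors in the integrand. Their combined exponent is $\imagunit\frac{\pi}{\lambda}\left(\frac{1}{F}-\frac{1}{z}\right)(x^2+y^2)$, and since $\left|\frac{1}{F}-\frac{1}{z}\right| = \frac{|F-z|}{Fz} = \frac{1}{z_{\rm eff}}$, the coefficient of $(x^2+y^2)$ has magnitude $\frac{\pi}{\lambda z_{\rm eff}}$. Because the end result is taken in modulus squared and $C^2+S^2$ is even, the sign of $\frac{1}{F}-\frac{1}{z}$ is immaterial, so I can work with $e^{\pm\imagunit\frac{\pi}{\lambda z_{\rm eff}}(x^2+y^2)}$.

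Second, each integral over $\mathcal{A}_n$ factors into an $x$-integral over $[\bar{x}_n-\frac{\delta}{2},\,\bar{x}_n+\frac{\delta}{2}]$ times a $y$-integral over $[-\frac{\delta}{2},\,\frac{\delta}{2}]$. The $y$-integral is independent of $n$, so it pulls out of the sum. The crucial observation is that the sub-intervals in $x$ tile $[-\frac{N\delta}{2},\,\frac{N\delta}{2}]$ with no gaps or overlaps, since consecutive intervals share the endpoint $\bar{x}_n+\frac{\delta}{2}=\bar{x}_{n+1}-\frac{\delta}{2}$; hence $\sum_{n=1}^N \int_{\bar{x}_n-\delta/2}^{\bar{x}_n+\delta/2}$ telescopes into a single integral over $[-\frac{N\delta}{2},\,\frac{N\delta}{2}]$. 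This collapses the sum-integral into a clean product of two one-dimensional integrals.

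Third, I would bring each integral into standard Fresnel form via the substitution $x = t\sqrt{\frac{\lambda z_{\rm eff}}{2}}$, which maps the quadratic phase $\frac{\pi}{\lambda z_{\rm eff}}x^2$ to $\frac{\pi t^2}{2}$. Using the even symmetry of the integrand, $\int_{-L}^L e^{\imagunit\frac{\pi}{\lambda z_{\rm eff}}x^2}\,dx = 2\sqrt{\frac{\lambda z_{\rm eff}}{2}}\left(C(u_L)+\imagunit S(u_L)\right)$ with $u_L = L\sqrt{\frac{2}{\lambda z_{\rm eff}}}$. Substituting $L=\frac{N\delta}{2}$ and $L=\frac{\delta}{2}$ with $\delta = \lambda/2$, a short computation gives the upper limits $\sqrt{a}\,N$ and $\sqrt{a}$, respectively, where $a=\frac{\lambda}{8 z_{\rm eff}}$. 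Taking moduli squared then produces the factors $C^2(\sqrt{a})+S^2(\sqrt{a})$ and $C^2(\sqrt{a}N)+S^2(\sqrt{a}N)$, each carrying a prefactor $2\lambda z_{\rm eff}$.

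Finally, I would collect the normalization. With antenna area $A=\delta^2$ and $\delta=\lambda/2$, the prefactor $\frac{(2\lambda z_{\rm eff})^2}{(NA)^2}$ simplifies, using $\frac{1}{a^2}=\frac{64 z_{\rm eff}^2}{\lambda^2}$, to exactly $\frac{1}{(Na)^2}$, yielding \eqref{eq_III_ApproxGainRect}. I expect two subtle points: (i) verifying the telescoping tiling — the entire collapse of the sum hinges on the sub-apertures abutting exactly, which is where the half-wavelength geometry enters; and (ii) the constant bookkeeping needed to match simultaneously the arguments $\sqrt{a}$, $\sqrt{a}N$ and the denominator $(Na)^2$. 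The remaining change of variables is routine.
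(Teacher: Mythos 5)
Your proposal is correct and follows essentially the same route as the paper's Appendix~\ref{App_Fresnel_Approx}: since the adjacent sub-apertures abut, the sum over $n$ collapses into a single separable double integral over $[-\frac{N\delta}{2},\frac{N\delta}{2}]\times[-\frac{\delta}{2},\frac{\delta}{2}]$ with combined quadratic phase of magnitude $\frac{\pi}{\lambda z_{\rm eff}}(x^2+y^2)$, which is then evaluated via the Fresnel integrals $C(\cdot)$ and $S(\cdot)$. You merely spell out the steps the paper leaves implicit --- the tiling argument, the sign-invariance under conjugation, the substitution yielding the arguments $\sqrt{a}$ and $\sqrt{a}N$, and the constant bookkeeping producing the prefactor $1/(Na)^2$ --- and all of these check out.
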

\begin{proof}
The proof is given in Appendix~\ref{App_Fresnel_Approx}
and is inspired by the derivation in \cite[Eq. (22)]{1956_Polk_TAP}.
\end{proof}
The approximation in \eqref{eq_III_ApproxGainRect} is tight when the transmitter is in the (radiative) near-field of the array where there are only phase variations across the array \cite{2023_TWC_BD}.

\section{MLA with Two ULAs}
\label{S_NF_TwoArr}

\begin{figure}
    \centering
     \begin{overpic}[width=\linewidth]{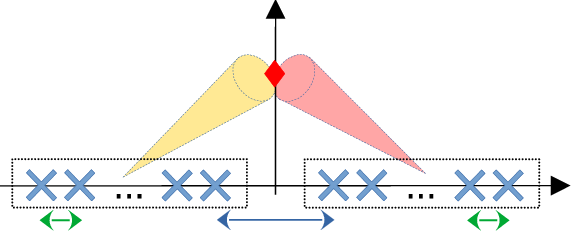}
     \put(3,14){ULA-$1$}
     \put(83,14){ULA-$2$}
     \put(80,-3){\small $ \delta=\lambda/2$}
     \put(5,-3){\small $ \delta=\lambda/2$}
     \put(46.5,-3){\small $\Delta$}
     \put(50,26){\small $(0,0,F)$}
     \put(44,40){\small$z$-axis}
     \put(94,2){\small$x$-axis}
     \end{overpic}
    \vspace{0.7mm}
    \caption{MLA composed by two ULAs deployed $\Delta$\,m apart. Both ULAs are focused at $(0,0,F)$.}
    \label{F_twoULAs}
    \vspace{-3mm}
\end{figure}

We will now consider an isotropic transmitter located at $(0,0,z)$ communicating with a receiver that is equipped with a pair of ULAs with the centers of the closest antennas of the two ULAs separated by $\Delta$ meters, as illustrated in Fig.~\ref{F_twoULAs}.  
We note that the two ULAs are positioned along the same line on the $x$-axis, which differs from the typical cell-free massive MIMO setup where ULAs can be deployed at arbitrary locations.
Each of the ULAs is equipped with $N$ antennas. The $n$-th antenna element for ULA-$\ell \in \{1,2\}$ is located at the point
\begin{equation}\label{eq:coord_twoULas}
    \bar{x}_n^{(\ell)} = \left( n - \frac{N+1}{2}\right) \delta + {\left( \ell - \frac{3}{2}\right) \left(\Delta + (N-1) \delta\right)}.
\end{equation} 
Notice that the second term in \eqref{eq:coord_twoULas} is either plus or minus   $\overline{\Delta}$, where $\overline{\Delta} = \frac{\Delta + (N-1) \delta}{2}$.
A special case of the above configuration occurs when the separation between the centers of the outermost antennas is $ \delta$ or when $\Delta=0$. In this scenario, the two ULAs will form a single ULA with an inter-element spacing of $ \delta$.
The aperture length of the array now becomes $D_{\rm array} = \sqrt{(\Delta + (2N-1)\delta)^2 + \delta^2}$. Consequently, the Fraunhofer distance of the array increases with the separation  $\Delta$ between the two ULAs. This implies that beamfocusing can be achieved even with a small total number of antennas in the two ULAs by appropriately adjusting the separation  $\Delta$. To this end, we will intensively study the behavior of the beamfocusing for this array model.

We begin by adapting the normalized array gain with MF-beamfocusing  expression in \eqref{eq:BF_ULA} to account for the two ULAs. The expression with the focal point  $(0,0,F)$ is given as 
\begin{align}
\notag& \hat{G}_2 = \frac{1}{{{(2  N \delta^2)}^{2}}}  \\ \notag
& \Bigg|  
\int\limits_{-\frac{N \delta}{2} }^{\frac{N\delta}{2} }
\int\limits_{-\frac{ \delta}{2}}^{\frac{ \delta}{2}} 
e^{\imagunit \frac{2\pi}{\lambda} \left(\frac{(x-\overline{\Delta})^2}{2F} + \frac{y^2}{2F}\right)}
{e}^{-\imagunit\frac{2\pi }{\lambda }\left(\frac{{{(x-\overline{\Delta} )}^{2}}}{2z}+\frac{{{y}^{2}}}{2z} \right)} dxdy  
\\ 
& +  \int\limits_{-\frac{N\delta}{2} }^{\frac{N\delta}{2} }\int\limits_{-\frac{ \delta}{2}}^{\frac{ \delta}{2}} 
e^{\imagunit \frac{2\pi}{\lambda} \left(\frac{(x+\overline{\Delta})^2}{2F} + \frac{y^2}{2F}\right)}
{e}^{-\imagunit\frac{2\pi }{\lambda }\left(\frac{{(x + \overline{\Delta} )^{2}}}{2z}+\frac{{{y}^{2}}}{2z} \right)}  dxdy \Bigg|^2 \label{eq:G2app}.
\end{align}

Let us now compare the normalized array gain of a standard ULA with that of the MLA. For the standard ULA, we consider $N=50$ antennas with half-wavelength spacing. For the MLA, each of the two ULAs consists of $25$ antennas, also half-wavelength spaced, separated by a distance of $\Delta=5$\,m. We set the focus of the array at the location $(0,0,30)$ and the arrays operate at the $15$\,GHz\footnote{The conclusions presented apply to other carrier frequencies as well.} carrier frequency. 
Fig.~\ref{F_BW}(a) depicts the normalized array gain (beampattern) of the ULA accross the $xz$-plane. We  observe that the beam energy spreads out infinitely behind the UE. This is a typical far-field beampattern and was expected since the UE distance is larger than the Fraunhofer array distance of $23$\,m. 
Conversely, Fig.~\ref{F_BW}(b) shows the MLA beampattern, where the signal is focused at the same UE location. This demonstrates that beamfocusing can be achieved using the MLA configuration, even with the same total number of antennas as in the ULA. In the following, we will provide a theoretical analysis of the MLA beampattern.

\begin{figure}
\centering
\subfloat[A single ULA.]
{\includegraphics[width=0.508\textwidth]{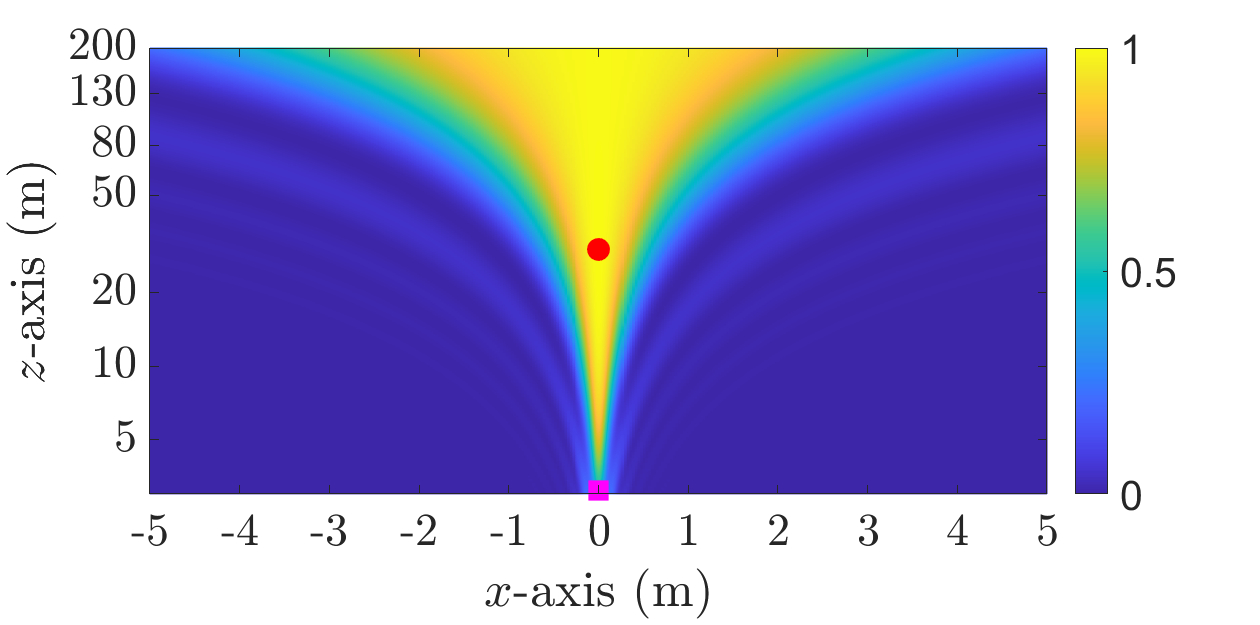}}\hfill
\centering
\subfloat[Two ULAs separated by $5$\,m apart.]
{\includegraphics[width=0.51\textwidth]{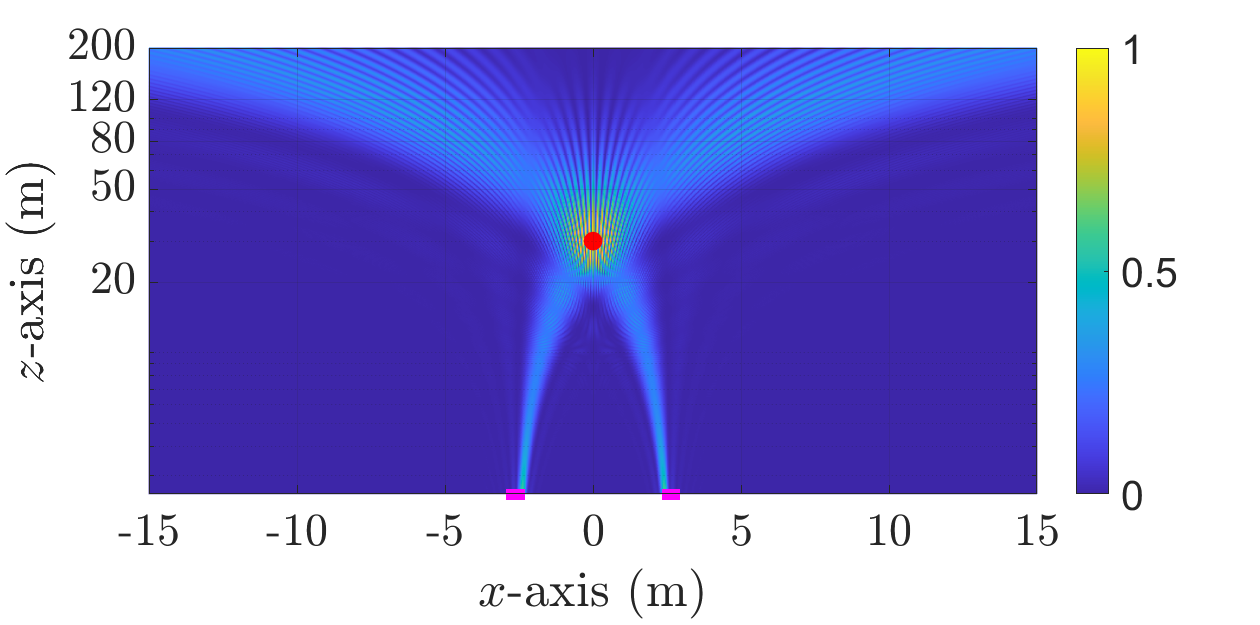}}
\caption{The beampattern of a single ULA and MLA. The magenta rectangles indicate the location of the ULA while the red circle indicates the location of the UE.}
\label{F_BW}
\end{figure}

 \subsection{Beamwidth}\label{S_Beamwidth_2arrays}

The  transverse beamwidth (BW) at the focal point $(0,0,F)$ can be analyzed based on the normalized gain from the transmitter along the $x$-axis.
We assume that the transmitter can be at any point of $(x_t,0,z=F)$. Accordingly, we can write the normalized array gain approximation in \eqref{eq:G2app} as in \eqref{eq:BW_twoArr} at the top of the next page. Note that the transmitter can now be located at any point along the $x$-axis, corresponding to a non-broadside transmission. Consequently, the transmitter's coordinate along the $x$-axis must be incorporated into the phase-shift expression in the normalized array gain. 
\begin{figure*}
\begin{align}
\notag
    &\frac{1}{{{(2  N \delta^2)}^{2}}}  \left |
      \left(
      \int\limits_{-\frac{N\delta}{2} }^{\frac{N\delta}{2} }
      \int\limits_{-\frac{ \delta}{2}}^{\frac{ \delta}{2}} 
       e^{\imagunit \frac{2\pi}{\lambda} \left(\frac{(x-\overline{\Delta})^2}{2F} + \frac{y^2}{2F}\right)}
     {e}^{-\imagunit\frac{2\pi }{\lambda }\left(\frac{{{(x-\overline{\Delta} -x_t)}^{2}}}{2z}+\frac{{{y}^{2}}}{2z} \right)} dx
    dy  +  \int\limits_{-\frac{N\delta}{2} }^{\frac{N\delta}{2} }\int\limits_{-\frac{ \delta}{2}}^{\frac{ \delta}{2}} 
     e^{\imagunit \frac{2\pi}{\lambda} \left(\frac{(x+\overline{\Delta})^2}{2F} + \frac{y^2}{2F}\right)}
     {e}^{-\imagunit\frac{2\pi }{\lambda }\left(\frac{{(x + \overline{\Delta} -x_t)^{2}}}{2z}+\frac{{{y}^{2}}}{2z} \right)}  dxdy \right) \right|^{2}\nonumber 
    \\
    &=\frac{1}{{{(2  N \delta^2)}^{2}}}  \left|
      \delta  \left(\int\limits_{-\frac{N\delta}{2}  -\overline{\Delta}}^{\frac{N\delta}{2}  -\overline{\Delta}}
       e^{\imagunit \frac{2\pi}{\lambda z} u_1 x_t} du_1 + 
     \int\limits_{-\frac{N\delta}{2}   +\overline{\Delta}}^{\frac{N\delta}{2}   +\overline{\Delta}} e^{\imagunit \frac{2\pi}{\lambda z} u_2 x_t} du_2
      \right) \right|^{2} \nonumber
    \\
    &=\Bigg|\frac{1}{2} \sinc\left(\frac{N x_t}{2z}\right) \left(e^{\imagunit \left(\frac{2\pi \overline{\Delta} x_t}{\lambda z}\right) } +   e^{-\imagunit \left(\frac{2\pi \overline{\Delta} x_t}{\lambda z}\right) } \right)  \Bigg|^2=\Bigg|\sinc\left(\frac{N x_t}{2z}\right) \cos\left(\frac{2 \pi \overline{\Delta} x_t}{\lambda z} \right)  \Bigg|^2   \leq \sinc^2\left(\frac{N x_t}{2z}\right).\label{eq:BW_twoArr}
\end{align}
\hrulefill \vspace{-4mm}
\end{figure*}

\begin{figure}
\centering
\subfloat[Total aperture length of $1$\,m.  ]
{ \includegraphics[width=0.48\textwidth]{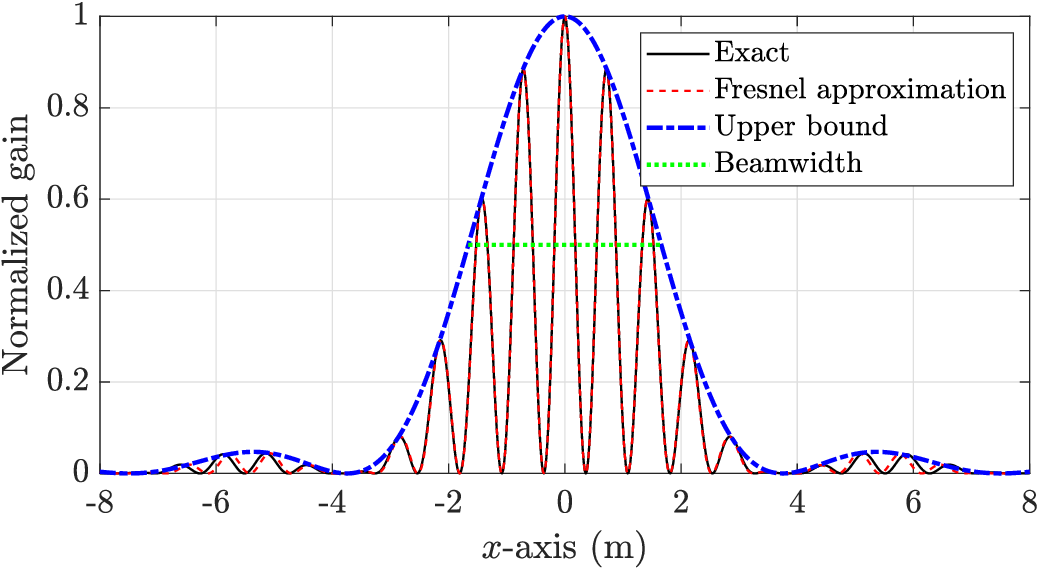}}\hfill
\centering
\subfloat[Total aperture length of $2$\,m.]
{ \includegraphics[width=0.48\textwidth]{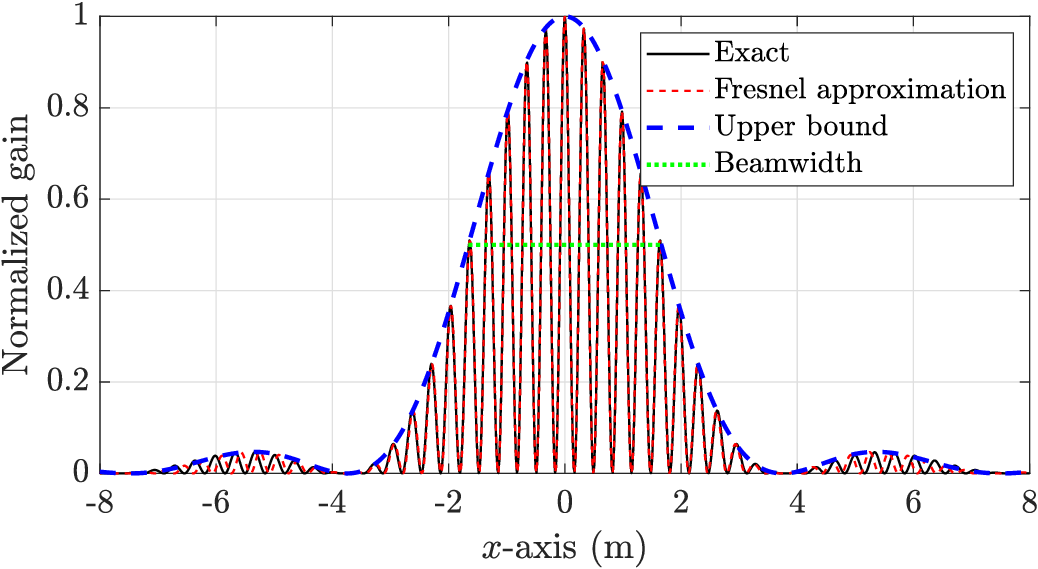}}
\caption{The beampattern of MLA along $x$-axis.} \vspace{-3mm}
\label{F_BW2}
\end{figure}

We evaluate the beamwidth of the MLA in Fig.~\ref{F_BW2}. Each ULA in MLA is equipped with $N=16$ antennas. The focal point is at $(0,0,30)$\,m.
The black curve corresponds to the exact normalized array gain, the red-dashed curve corresponds to the Fresnel approximation of the normalized array gain obtained from \eqref{eq:BW_twoArr}, the blue-dashed curve corresponds to the upper bound of the normalized array gain expression in \eqref{eq:BW_twoArr}, and the green-dashed line corresponds to the width of the upper bound curve which we refer to as the beamwidth.

In Fig.~\ref{F_BW2}(a), the two ULAs are separated by a distance of $1$\,m. We observe that the normalized array gain oscillates along the $x$-axis. The gain oscillates more when we consider a larger separation between the two ULAs (i.e., $2$\,m), illustrated in Fig.~\ref{F_BW2}(b). The gain fluctuations are undesirable because they imply that the beamforming gain can drop to zero unexpectedly, even when the UE is within the beamfocusing region. This region is defined as the area where the upper bound is greater than or equal to half of its maximum value ($0.5$), marked by the green line, corresponding to its $3$\,dB beamwidth. 
This beamwidth can be analytically expressed as 
\begin{equation}\label{eq:3dB_BW}
    0.443 \approx \frac{N |x_t|}{2F} \rightarrow {\rm BW_{\rm 3 dB}} \approx \frac{1.77F}{N}, 
\end{equation}
since  $\sinc^2(0.443)\approx 0.5$. This value for the upper bound does not depend on the inter-array spacing $\Delta$. However, the spacing will determine the oscillations in the actual gain pattern.

Since gain oscillations within the beamfocusing region are undesirable, we will now analytically derive the necessary conditions to eliminate them. 
We first characterize the nulls of the small beams within the beamfocusing region. The nulls appear when $\cos(\frac{2 \pi \overline{\Delta}x_t}{\lambda F}) = 0$. Since $\cos\left((2 k +1 )  \pi/2\right)=0$ for any integer value of $k$, we obtain that the nulls occur when 
\begin{equation}
    x_t = \pm\frac{\lambda F}{4 \overline{\Delta}}(2k+1).
\end{equation}
We can then calculate the length of two adjacent nulls as $\frac{\lambda F}{2\overline{\Delta}}$, obtained by setting $k=0$.
Therefore, the number of nulls within the beamfocusing region is approximately 
\begin{equation}\label{eq:num_nulls}
     \left\lfloor\frac{{\rm BW_{3dB}}} {\lambda F/ (2  \overline{\Delta})}  \right\rfloor +1.
\end{equation}
From \eqref{eq:num_nulls}, we understand that the number of small beams within the beamfocusing region decreases when the number of antennas per ULA $(N)$ increases.

Since we do not want any fluctuation inside of the beamfocusing region, we want to have only a single beam inside of it. The condition that satisfies this is $\left\lfloor \frac{3.54 \overline{\Delta}} {{\lambda N  }} \right\rfloor \leq 1$, which can also be written as $ \frac{3.54 \overline{\Delta}} {{\lambda N  }}  < 2$ .
Therefore,  the number of antennas in each ULA that satisfies the condition is
\begin{equation}\label{eq:Nmin}
    N > {\frac{ 1.77 \overline{\Delta}}{\lambda}}.
\end{equation}
This  implies that the ratio between the blank space $\Delta$ and filled antennas $N\lambda$  in the whole aperture of the MLA must be less than $57 \%$ to get only a small beam inside of the beamfocusing region. To illustrate this, we plot the beampattern across the $xz$-plane and the normalized array gain along the $x$-axis in Fig.~\ref{F_Beampattern_nice}. We set the total aperture length of the MLA as $D_{\rm array} = 2$\,m, where the separation between the centers of the outermost antennas is $\Delta = 0.72$\,m. Hence, the ratio of $\Delta/(N\lambda)$ is around $56 \%$.
In the other perspective, if the aperture length of the array is $2$\,m, then, there are $200$ antennas in the case of ULA, however, only $128$ antennas with the MLA configuration are needed to achieve a desired beamfocusing.
The beam can now be focused nicely within the desired location $(0,0,30)$ in contrast to the beampattern shown in Fig.~\ref{F_BW}(b) which contains many ripples. The beampattern in  Fig.~\ref{F_BW}(b) was generated using MLA with the same aperture length of $2$\,m but smaller number of antennas $N=16$.

\begin{figure}
\centering
\subfloat[Beampattern.]
{\includegraphics[width=\linewidth]{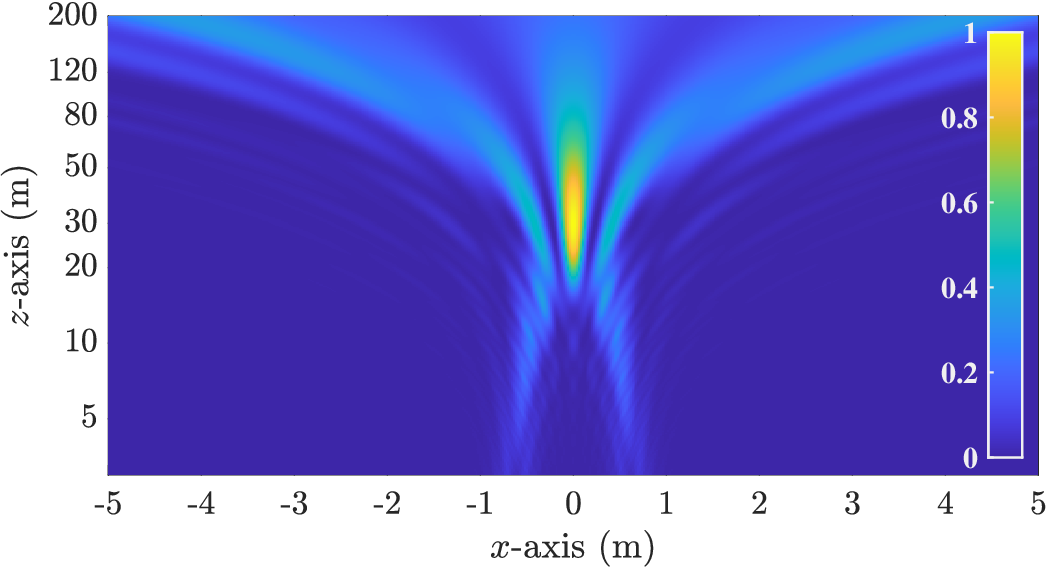}}\hfill
\centering
\subfloat[Normalized array gain.]
{\includegraphics[width=\linewidth]{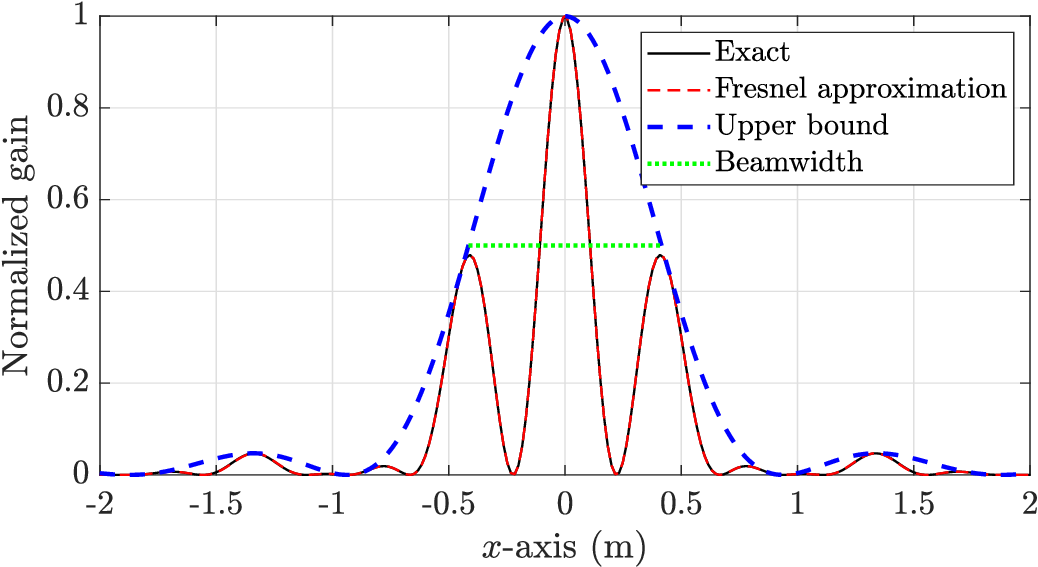}}
\caption{MLA beamforming with the number of antennas in each ULA $N=64$ and focal point $(0,0,30)$\,m.}
\label{F_Beampattern_nice}
\end{figure}

 \subsection{Beamdepth}

 The depth of the beam can be analyzed based on the normalized array gain, where we expect to see the energy focused on a point along the $z$-axis. We now consider a transmitter at $(0,0,z)$ and the MF is focused at point $(0,0,F)$. We can derive the closed-form expression of  \eqref{eq:G2app}, stated in the following theorem. 
 
\begin{theorem}\label{Fresnel_Approx_twoULAs}
The  Fresnel approximation of the normalized array gain for MLA when the transmitter is located at $(0,0,z)$ and the MF is focused on $(0, 0, F)$ is
\begin{multline}\label{eq_III_ApproxGainTwoULAs}
\hat{G}_{2} =  \frac{1}{(2 N a)^2}  \left( C^2(\sqrt{a}) + S^2(\sqrt{a}) \right)  \\
\cdot\left( \left( C(\beta_1) +C(\beta_2) \right)^2 + \left( S(\beta_1) +S(\beta_2) \right)^2\right),
\end{multline}
 where   $\beta_1 = \sqrt{a} N + \sqrt{\frac{2 }{\lambda z_{\rm eff}}}\overline{\Delta}$,  $\beta_2 = \sqrt{a} N - \sqrt{\frac{2 }{\lambda z_{\rm eff}}}\overline{\Delta}$, $a = \frac{\lambda}{8{z}_{\rm eff}}$, and $z_{\rm eff} = \frac{Fz}{|F-z|} $.
\end{theorem}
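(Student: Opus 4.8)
The plan is to reduce the double integral in \eqref{eq:G2app} to products of one-dimensional Fresnel integrals, exactly as in the single-ULA case of Theorem~\ref{Fresnel_Approx_ULA}, and then exploit the symmetry of the two sub-arrays to combine them. First I would merge the two exponentials in each summand so that each integrand takes the separable form $e^{\imagunit c\left((x\mp\overline{\Delta})^2 + y^2\right)}$, where $c = \frac{\pi}{\lambda}\left(\frac{1}{F}-\frac{1}{z}\right)$ satisfies $|c| = \frac{\pi}{\lambda z_{\rm eff}}$ by the definition $z_{\rm eff} = \frac{Fz}{|F-z|}$. Because the quadratic phase separates in $x$ and $y$, each double integral factors as $I_y\cdot I_x^{(\mp)}$, where the $y$-integral $I_y = \int_{-\delta/2}^{\delta/2} e^{\imagunit c y^2}\,dy$ is common to both sub-arrays. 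Pulling $I_y$ out, the gain becomes $\hat{G}_2 = \frac{1}{(2N\delta^2)^2}\,|I_y|^2\,\bigl|I_x^{(-)} + I_x^{(+)}\bigr|^2$.

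Next I would evaluate the three one-dimensional integrals by the standard substitution $t = u\sqrt{2|c|/\pi} = u\sqrt{2/(\lambda z_{\rm eff})}$ that maps $e^{\imagunit c u^2}$ onto the Fresnel kernel $e^{\imagunit \pi t^2/2}$. Using $\delta=\lambda/2$, the half-width $\tfrac{\delta}{2}$ maps to $\sqrt{a}$ and $\tfrac{N\delta}{2}$ maps to $N\sqrt{a}$, with $a=\frac{\lambda}{8z_{\rm eff}}$; this immediately gives $I_y \propto C(\sqrt a) + \imagunit S(\sqrt a)$. For the two $x$-integrals I would first shift $u = x\mp\overline{\Delta}$ and then apply the same substitution, which sends the limits to $[-\beta_1,\beta_2]$ for one sub-array and $[-\beta_2,\beta_1]$ for the other, where $\beta_{1,2} = N\sqrt a \pm \sqrt{2/(\lambda z_{\rm eff})}\,\overline{\Delta}$ exactly as in the statement.

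The key step --- and the reason the clean additive form $\bigl(C(\beta_1)+C(\beta_2)\bigr)$, $\bigl(S(\beta_1)+S(\beta_2)\bigr)$ emerges rather than a difference --- is to invoke the oddness of the Fresnel integrals, $C(-w)=-C(w)$ and $S(-w)=-S(w)$. Splitting each integral at the origin and applying oddness shows that $\int_{-\beta_1}^{\beta_2}$ and $\int_{-\beta_2}^{\beta_1}$ yield the identical value $\bigl(C(\beta_1)+C(\beta_2)\bigr) + \imagunit\bigl(S(\beta_1)+S(\beta_2)\bigr)$; thus the two sub-arrays contribute equally and add constructively, so that $I_x^{(-)}+I_x^{(+)} = 2I_x^{(-)}$. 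Taking the squared modulus then produces the second bracketed factor in \eqref{eq_III_ApproxGainTwoULAs}.

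Finally I would collect the scalar prefactors. Both $|I_y|^2$ and $|I_x^{(-)}+I_x^{(+)}|^2$ carry a common factor $\frac{2\pi}{|c|} = 2\lambda z_{\rm eff}$, so their product contributes $(2\lambda z_{\rm eff})^2$; combining this with $\frac{1}{(2N\delta^2)^2}$ and substituting $\delta=\lambda/2$ and $a=\frac{\lambda}{8z_{\rm eff}}$ collapses the constant to $\frac{1}{(2Na)^2}$, as claimed. The main obstacle I anticipate is bookkeeping rather than conceptual: one must track the sign of $c$ (equivalently, the cases $z<F$ and $z>F$) consistently, which is harmless here since only $|I_y|^2$ and $|I_x^{(-)}+I_x^{(+)}|^2$ enter and complex conjugation leaves them invariant, and one must map the four signed endpoints to $\pm\beta_1,\pm\beta_2$ without error so that the oddness argument lines up.
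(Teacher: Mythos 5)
Your proof is correct and takes essentially the same route as the paper's Appendix~B: factor out the common $y$-integral, map the shifted $x$-integrals onto Fresnel integrals over $[-\beta_1,\beta_2]$ and $[-\beta_2,\beta_1]$, invoke the oddness of $C(\cdot)$ and $S(\cdot)$ to merge the two sub-array contributions, and collect the constants (your bookkeeping $\frac{2\pi}{|c|}=2\lambda z_{\rm eff}$ combined with $\delta=\lambda/2$ and $a=\frac{\lambda}{8 z_{\rm eff}}$ indeed collapses the prefactor to $\frac{1}{(2Na)^2}$, matching how the paper folds its intermediate $\frac{1}{(4Na)^2}$ form). The only difference is cosmetic: you make explicit that the sign of $c$ (the cases $z<F$ versus $z>F$) is harmless because only squared moduli appear, a point the paper leaves implicit.
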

\begin{proof}
The proof is given in Appendix~\ref{App_Fresnel_Approx_twoULAs}.
\end{proof}

\begin{figure}
    \centering
    \includegraphics[width=\linewidth]{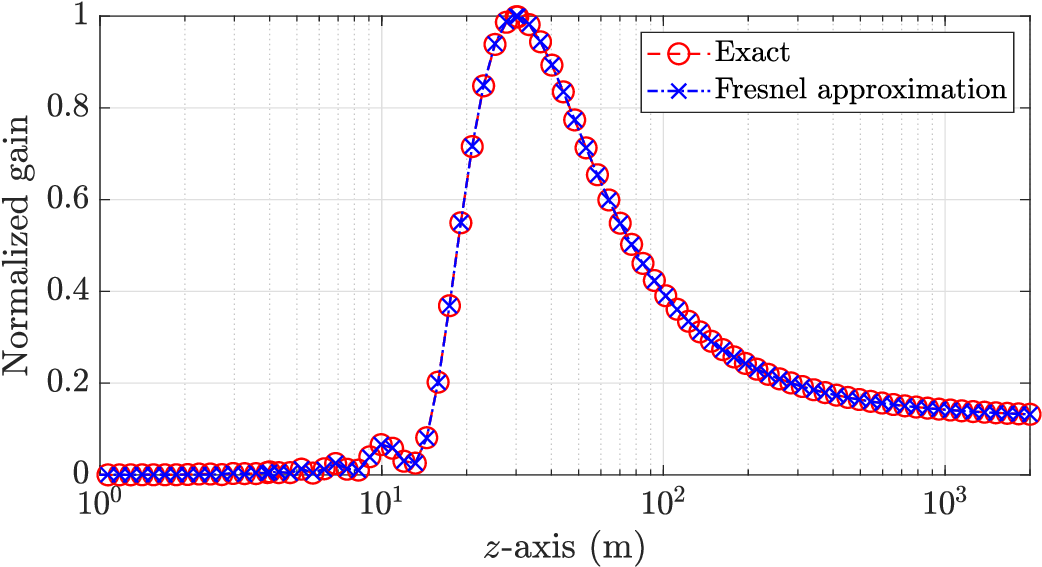}
    \caption{The Fresnel approximation of MLA with the number of antennas in each ULA $N=64$ and focal point $(0,0,30)$\,m.}
    \label{F_BD1} \vspace{-3mm}
\end{figure}

 We evaluate the Fresnel approximation of the normalized array gain in \eqref{eq_III_ApproxGainTwoULAs} by plotting Fig.~\ref{F_BD1} for the same setup as in Fig.~\ref{F_Beampattern_nice}. The Fresnel approximation results in virtually the same gain as the exact value obtained by numerically evaluating the array gain expression with MF at the focal point $(0,0,30)$\,m.

\section{Conclusions}
We presented a comprehensive analysis of near-field beamforming using MLAs, showcasing their advantages in achieving beamfocusing. 
We demonstrated that, with the same number of antennas, beamfocusing is not achievable in a ULA configuration, but it can be achieved with the MLA.
We derived the analytical expressions for beamwidth and beamdepth providing insights into designing the MLA to enable beamfocusing with not so large number of antennas. 
The simulation result confirmed the findings, which shows that only about $64\%$ of the antennas in the ULA are needed to achieve beamfocusing when using the MLA.

\appendices
\section{Proof of Theorem~\ref{Fresnel_Approx_ULA}}
\label{App_Fresnel_Approx}
By defining $z_{\rm eff} = \frac{Fz}{|F-z|}$, we can rewrite \eqref{eq:BF_ULA} as
\begin{align}\label{G-derivation}
    \hat{G}_{\rm ULA} =\frac{1}{(NA)^{2}}{{\left| \int\limits_{-\frac{N\delta}{2} }^{\frac{N\delta}{2} }\int\limits_{-\frac{\delta}{2}}^{\frac{\delta}{2}} 
    {e}^{-\imagunit\frac{\pi }{\lambda } \frac{ {{x}^{2}}}{z_{\rm eff}}  }  {e}^{-\imagunit\frac{\pi }{\lambda } \frac{ {{y}^{2}}}{z_{\rm eff}}  }dydx \right|}^{2}}.
\end{align}
The evaluation of the anti-derivatives in \eqref{G-derivation} yields  \cite{1956_Polk_TAP}
\begin{equation}
\hat{G}_{\rm ULA} =
\frac{\left( {{C}^{2}}\left( \sqrt{ a } \right)+{{S}^{2}}\left( \sqrt{ a } \right) \right) \left( {{C}^{2}}\left( \sqrt{ a } N \right)+{{S}^{2}}\left( \sqrt{ a } N \right) \right)}{( N a )^{2}},
\end{equation}
where $C\left(\cdot \right)$ and $S\left(\cdot \right)$ are the Fresnel integrals, and $ a = \frac{\lambda}{8{z}_{\rm eff}}$. This completes the proof.

\section{Proof of Theorem~\ref{Fresnel_Approx_twoULAs}}
\label{App_Fresnel_Approx_twoULAs}

 We write \eqref{eq:G2app} as
\begin{align}
\notag
& \hat{G}_2= \frac{1}{{{(2  N \delta^2)}^{2}}}  \Bigg| \int\limits_{-\frac{ \delta}{2}}^{\frac{ \delta}{2}} 
{e}^{\imagunit\frac{\pi }{\lambda }\frac{{{y}^{2}}}{z_{\rm eff}} } dy \\  
&\hspace{2cm}   \cdot\left( \int\limits_{-\frac{N\delta}{2}}^{\frac{N\delta}{2} }
e^{\imagunit \frac{\pi}{\lambda} \frac{(x-\overline{\Delta})^2}{ z_{\rm eff}} } dx +
\int\limits_{-\frac{N\delta}{2} }^{\frac{N\delta}{2} }
e^{\imagunit \frac{\pi}{\lambda} \frac{(x+\overline{\Delta})^2}{ z_{\rm eff}} }
 dx \right) \Bigg|^2. \label{G2a}
\end{align} 
The evaluation of the anti-derivatives in \eqref{G2a} using the Fresnel integrals yields
$\hat{G}_{2} = \frac{1}{{{(4  N a)}^{2}}}  
 \left( C^2(\sqrt{a}) + S^2(\sqrt{a}) \right)  
 \cdot( ( C(\beta_1) - C(-\beta_1) +C(\beta_2) - C(-\beta_2) )^2$ 
 $+ 
 ( S(\beta_1) -S(-\beta_1) + S(\beta_2) -S(-\beta_2) )^2)$,
where  $ a = \frac{\lambda}{8{z}_{\rm eff}}$, $\beta_1 = \sqrt{a} N + \sqrt{\frac{2 }{\lambda z_{\rm eff}}}\overline{\Delta}$, and  $\beta_2 = \sqrt{a} N - \sqrt{\frac{2 }{\lambda z_{\rm eff}}}\overline{\Delta}$. Since $C(\cdot)$ and $S(\cdot)$ are both odd functions, we can simplify $\hat{G}_{2}$ as
$\hat{G}_{2} = \frac{1}{{{(2  N a)}^{2}}}  
 ( C^2(\sqrt{a}) + S^2(\sqrt{a}) 
\cdot( \left( C(\beta_1) +C(\beta_2) \right)^2 + \left( S(\beta_1)  + S(\beta_2)  \right)^2)$,
which completes the proof.

\bibliographystyle{IEEEtran}
\bibliography{IEEEabrv,refs}

\end{document}